\newtheorem{thm}{Theorem}[section]
\newtheorem{prop}[thm]{Proposition}
\newtheorem{lem}[thm]{Lemma}
\newtheorem{rem}[thm]{Remark}
\def\o#1{\overline{#1}}
\def\u#1{\underline{#1}}
\def\dfrac#1#2{{\displaystyle\frac{#1}{#2}}}
\title{Lax pairs for additive difference Painlev\'e equations}
\author{Hidehito Nagao}
\address{Department of Arts and Science, National Institute of Technology, Akashi College, Hyogo 674-8501, Japan}
\email{nagao@akashi.ac.jp}
\keywords{Lax pair, additive difference Painlev\'e equation, Pad\'e method, Pad\'e interpolation.}
\subjclass[2010]{14H70, 34M55, 37K20, 39A10, 41A05}
\begin{document}

\begin{abstract}
A Lax pair for the additive difference Painlev\'e equation of type $E_7^{(1)}$ is explicitly obtained as certain linear difference equations of scalar form. The compatibility of the Lax pair is proved by using certain characterization of the coefficients in the Lax equation. Some Lax pairs for types $E_6^{(1)}$, $D_4^{(1)}$ and $A_3^{(1)}$ are also given by the degeneration. 
\end{abstract}

\maketitle
\renewcommand\baselinestretch{1.2}

\section{Introduction}\label{sec:intro}
 
The second order discrete Painlev\'e equations were classified in \cite{Sakai01}, based on rational surfaces connected to extended affine Weyl groups. There exist three types of discrete Painlev\'e equations in the classification: elliptic difference ($e$-), multiplicative difference ($q$-) and additive difference ($d$-) types. For each type of differences, the list of types of affine Weyl groups arising as the symmetry of B\"acklund transformations are given as the following diagram:

{\arraycolsep=0.1pt
\begin{equation}\label{eq:classification}\nonumber
\begin{array}{lcccccccccccccccccc}
{\rm elliptic} \hspace{1mm} \mbox{($e$-)}\quad&E^{(1)}_8\\[-2mm]
&&&&&&&&&&&&&&&{\mathbb Z}\\[-2mm]
&&&&&&&&&&&&&&\nearrow&&\searrow\\
{\rm multiplicative}\hspace{1mm} \mbox{($q$-)}\quad&E^{(1)}_8&\rightarrow&E^{(1)}_7&\rightarrow&E^{(1)}_6
&\rightarrow &D_5^{(1)}&\rightarrow&A_4^{(1)}
&\rightarrow&(A_2+A_1)^{(1)}&\rightarrow&(A_1+A_1^{\prime})^{(1)}
&\rightarrow&A_1^{(1)}&&{\mathcal D}_6\\[2mm]
\\
{\rm additive}\hspace{1mm} \mbox{($d$-)}\quad&E^{(1)}_8&\rightarrow&E^{(1)}_7&\rightarrow&E^{(1)}_6
&&\rightarrow&&D_4^{(1)}&\rightarrow&A_3^{(1)}&\rightarrow
&(A_1+A_1^{\prime})^{(1)}&\rightarrow&A^{(1)}_1&\rightarrow&{{\mathbb Z}_2}\\
&&&&&&&&&&&&\searrow&&\searrow&&&\downarrow\\
&&&&&&&&&&&&&A_2^{(1)}&\rightarrow&A_1^{(1)}\rightarrow&&1.\\
\end{array}
\end{equation}
}
Here $A \to B$ means that $B$ is obtained from $A$ by degeneration.

The main subject of this paper is the $d$-$E_7^{(1)}$ equation, namely the $d$-Painlev\'e equation of type $E_7^{(1)}$ \cite{GR99, KNY15, Sakai07}. For variables $f, g$ in $\mathbb{P}^{1}\times\mathbb{P}^{1}$ and parameters $\delta, a_i, b_i, (i=1,2,3,4) \in \mathbb{C}$ with a constraint
\begin{equation}\label{eq:cons}
\begin{array}l
a_1-a_2-a_3+a_4+b_1+b_2-b_3+b_4+\delta=0, 
\end{array}
\end{equation}
define a shift operator $T$ as  
\begin{equation}\label{eq:E7T}
T: \left(
\begin{array}{cccccc}
a_1,&a_2,&a_3,&a_4\\[3mm]
b_1,&b_2,&b_3,&b_4
\end{array},f,g
\right)\mapsto
\left(
\begin{array}{cccc}
a_1-\delta,&a_2-\delta,&a_3-\delta,&a_4-\delta\\[3mm]
b_1,&b_2,&b_3,&b_4
\end{array},\o{f},\o{g}
\right).
\end{equation}
Here for any object $X$ the corresponding shifts are denoted as $\o{X}:=T(X)$ and $\u{X}:=T^{-1}(X)$. Then the $d$-$E_7^{(1)}$ equation can be described by the birational transformation $T^{-1}(g)=\u{g}(f,g)$ and $T(f)=\o{f}(f,g)$ in $\mathbb{P}^1 \times\mathbb{P}^1$ as follows:

\begin{equation}\label{eq:E7gd}
\begin{array}l
\dfrac{\left(f-g-v\right)(f-\u{g}-v-\delta)}{(f-g)(f-\u{g})}=\dfrac{A(f)}{B(f)},
\end{array}
\end{equation}
\begin{equation}\label{eq:E7fu}
\begin{array}l
\dfrac{(f-g-v)(\o{f}-g-v+\delta)}{(f-g)(\o{f}-g)}=\dfrac{A(g+v)}{B(g)},
\end{array}
\end{equation}
where
\begin{equation}\label{eq:ABv}
A(x):=\prod_{i=1}^4(x-a_i), \quad B(x):=\prod_{i=1}^4(x-b_i),\quad v:=a_1+a_4-b_3.
\end{equation}

We briefly recall a background on Lax pairs of discrete Painlev\'e equations. In \cite{JS96} the 2 $\times$ 2 matrix Lax pairs for type $q$-$D_5^{(1)}$ was derived making use of the connection preserving deformation of a 2 $\times$ 2 matrix system of $q$-difference equations. Some 2 $\times$ 2 matrix Lax pairs for types from $d$-$D_4^{(1)}$ to $d$-$A_1^{(1)}$ were derived in \cite{GORS98}, using a Schlesinger transformation of differential equations. In \cite{Sakai06} the 2 $\times$ 2 matrix Lax pairs for type $q$-$E_6^{(1)}$ was derived, making the use of the similar way as \cite{JS96}. Some 2 $\times$ 2 matrix Lax pairs were obtained utilizing moduli spaces of difference connection on $\mathbb{P}^1$ in \cite{AB06} for types $d$-$E_6^{(1)}$ and $d$-$D_4^{(1)}$. In \cite{Murata09} the 2 $\times$ 2 matrix Lax pairs for types from $q$-$A_4^{(1)}$ to $q$-$A_1^{(1)}$ were derived by utilizing the similar way as \cite{JS96, Sakai06}. Certain matrix Lax pairs were obtained as certain Fuchsian system of differential equations in \cite{Boalch09} for types $d$-$E_8^{(1)}$, $d$-$E_7^{(1)}$ and $d$-$E_6^{(1)}$. Some scalar Lax pairs of discrete Painlev\'e equations were given as linear difference equations, using a characterization in the coordinates $(f,g) \in \mathbb{P}^1 \times \mathbb{P}^1$ in \cite{Yamada09-2} for type $e$-$E_8^{(1)}$, \cite{Yamada11} for types $q$-$E_8^{(1)}$ and $q$-$E_7^{(1)}$, respectively. In \cite{KNY15} the scalar Lax pairs for all the discrete Painlev\'e equations have recently proposed by utilizing the characterization in the coordinates $(f,g)$. We remark that the results of this paper have been obtained independent of \cite{KNY15} and they were derived from certain Pad\'e problems (see Section \ref{sec:cr}). 

%
This paper is organized as follows: In Section \ref{subsec:Lax} the Lax pair for type $d$-$E_7^{(1)}$ is explicitly obtained as certain linear difference equations of scalar form. In Section \ref{subsec:proof} we prove that the $d$-$E_7^{(1)}$ equation (\ref{eq:E7gd}) and (\ref{eq:E7fu}) is the sufficient condition for the compatibility of the Lax pair by using certain characterization in terms of $x$ (see Section \ref{subsec:proof}), which are related but different from the characterization in the coordinates $(f,g)$ (see Remark \ref{rem:fg}). In Section \ref{sec:Degenerations} the Lax pairs for types $d$-$E_6^{(1)}$, $d$-$D_4^{(1)}$ and $d$-$A_3^{(1)}$ are obtained by the degeneration. In Section \ref{sec:cr} the relation between Pad\'e interpolation problems and the results of this paper are discussed shortly as Concluding Remarks. 
\section{The Lax pair for $d$-Painlev\'e equation of type $E_7^{(1)}$ }\label{sec:E7}　
In this section, we will consider the Lax pair for type $d$-$E_7^{(1)}$. In Section \ref{subsec:Lax} we will give certain two linear difference equations as the scalar Lax pair for type $d$-$E_7^{(1)}$ and derive the $d$-$E_7^{(1)}$ equation. In Section \ref{subsec:proof} it will be proved that equations (\ref{eq:E7gd}) and (\ref{eq:E7fu}) are the sufficient condition for the compatibility of the linear equations. 

\subsection{Lax equations}\label{subsec:Lax}　\\
Let us consider two linear equations for an unknown function $y(x)$: $L_2(x)=0$ as the equation between $y(x)$, $y(x+\delta)$, $\o{y}(x)$ and $L_3(x)=0$ as the equation between $y(x)$, $\o{y}(x)$, $\o{y}(x-\delta)$, where
$L_2(x)$ and $L_3(x)$ are given as linear three term expressions
\begin{equation}\label{eq:E7L2}
\begin{array}l   
L_2(x):=(x-f) \o{y}(x)-(x-g-v) y(x+\delta)+(x-g) y(x),
\end{array}
\end{equation}
\begin{equation}\label{eq:E7L3}
\begin{array}l
L_3(x):= w(x-\o{f}-\delta)+A(x)(x-g-\delta)\o{y}(x)-B(x-\delta)(x-g-v)\o{y}(x-\delta),
\end{array}
\end{equation}
and $f, g, \o{f}, w$ are variables 
independent of $x$. Then we have
\begin{prop}\label{prop:E7eq}
The compatibility of the linear equations $L_2=0$ (\ref{eq:E7L2}) and $L_3=0$ (\ref{eq:E7L3}) gives conditions (\ref{eq:E7gd}) and (\ref{eq:E7fu}). 
\end{prop}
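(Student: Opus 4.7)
The plan is to derive equations (\ref{eq:E7gd}) and (\ref{eq:E7fu}) by specializing the polynomial-in-$x$ coefficients of $L_2$ and $L_3$ at their zeros, and then eliminating the auxiliary quantities $\overline{y}(x)$ and $w$ to leave algebraic identities involving only the parameters together with $f$, $g$, $\overline{f}$ and $\underline{g}$. In effect, one reads off the Painlev\'e equation as the condition that the elimination be consistent.

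For equation (\ref{eq:E7fu}), the first step is to specialize $L_2(x)=0$ at $x=g$ and $x=g+v$, the zeros of the coefficients of $y(x)$ and $y(x+\delta)$ respectively. This yields $\overline{y}(g)=v\,y(g+\delta)/(f-g)$ and $\overline{y}(g+v)=v\,y(g+v)/(f-g-v)$. Next I would specialize $L_3(x)=0$ at $x=g+\delta$ and $x=g+v$, the zeros of the polynomial coefficients of $\overline{y}(x)$ and $\overline{y}(x-\delta)$ in $L_3$; each specialization yields an expression for $w$ in terms of the corresponding $\overline{y}$ value. Eliminating $w$ between the two and substituting the $\overline{y}$-values from the previous step produces a relation whose structure matches (\ref{eq:E7fu}) after the $y$-factors are shown to cancel.

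Equation (\ref{eq:E7gd}) involves $\underline{g}=T^{-1}(g)$, so I would carry out an analogous chain of eliminations on $L_2$ together with the $T^{-1}$-shifted Lax equation $\underline{L_3}$. Here one specializes at $x=f$ (the zero of the coefficient of $\overline{y}(x)$ in $L_2$) and at the corresponding zeros of the coefficient polynomials in $\underline{L_3}$. The roles of $A$ and $B$ are effectively interchanged, which is consistent with the appearance of $A(f)/B(f)$ in (\ref{eq:E7gd}) as opposed to $A(g+v)/B(g)$ in (\ref{eq:E7fu}).

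The main obstacle is ensuring that all ratios of $y$-values which appear in the intermediate stages do cancel out, since (\ref{eq:E7gd}) and (\ref{eq:E7fu}) are purely algebraic in the parameters. This is precisely the \emph{characterization of the coefficients in the Lax equation} cited in the abstract: the polynomial degrees and the zero loci of the coefficients of $L_2$ and $L_3$ in $x$ force the surviving $y$-ratios to reduce to unity (or to collapse under an additional specialization of $L_2$). A subsidiary verification that the constraint (\ref{eq:cons}) is preserved under the shift $T$ of (\ref{eq:E7T}) is a direct computation from the definition of $v$ in (\ref{eq:ABv}).
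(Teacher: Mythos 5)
Your specialization-and-elimination scheme is exactly the one the paper itself uses: the paper obtains (\ref{eq:E7gd}) from $L_2(x)=\u{L}_3(x+\delta)=0$ at $x=f$, and obtains (\ref{eq:E7fu}) by producing two $y$-free expressions for $w$ from $L_2(x)=L_3(x+\delta)=0$ at $x=g$ and from $L_2(x)=L_3(x)=0$ at $x=g+v$ (the paper's (\ref{eq:w1}) and (\ref{eq:w2})) and then eliminating $w$. Your chosen evaluation points coincide with these. The genuine problem lies in the step you yourself single out as ``the main obstacle'': your proposed mechanism for the cancellation of $y$-values is not valid, and without a correct mechanism the proof does not close.

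Nothing about polynomial degrees or zero loci can force ``surviving $y$-ratios to reduce to unity''; for a generic solution $y(x)$ these ratios are transcendental data, and the characterization quoted in the abstract is the content of Lemmas \ref{lem:E7L1p} and \ref{lem:E7L1*p}, which the paper uses only for the sufficiency direction (Theorem \ref{thm:Lax}), not for this proposition. What actually makes the $y$'s disappear is that the term $w(x-\o{f}-\delta)$ in $L_3$ carries a factor $y(x)$: this is what the phrase ``$L_3(x)=0$ as the equation between $y(x)$, $\o{y}(x)$, $\o{y}(x-\delta)$'' asserts, even though the factor is not visible in display (\ref{eq:E7L3}). With that reading, at $x=g+\delta$ the surviving terms of $L_3=0$ are $w(g-\o{f})\,y(g+\delta)$ and $-B(g)(\delta-v)\,\o{y}(g)$; inserting your relation $\o{y}(g)=v\,y(g+\delta)/(f-g)$ makes both terms proportional to $y(g+\delta)$, which cancels identically and leaves the $y$-free relation $w=v(v-\delta)B(g)/\bigl((f-g)(\o{f}-g)\bigr)$; likewise $x=g+v$ gives $w=v(v-\delta)A(g+v)/\bigl((f-g-v)(\o{f}-g-v+\delta)\bigr)$, and eliminating $w$ yields (\ref{eq:E7fu}) exactly. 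If instead one takes (\ref{eq:E7L3}) literally, with an inhomogeneous $w$-term, then your two expressions for $w$ are proportional to $y(g+\delta)$ and $y(g+v)$ respectively, the ratio $y(g+\delta)/y(g+v)$ survives every elimination, and no characterization argument removes it, so the statement would not follow. The same homogeneous reading is what makes your sketch for (\ref{eq:E7gd}) work verbatim: the $\u{w}$-term of $\u{L}_3(x+\delta)$ has coefficient $(x-f)\u{y}(x+\delta)$ and so dies at $x=f$, and comparing the resulting value of $y(f+\delta)/y(f)$ with the one coming from $L_2(f)=0$, using $\u{A}(f+\delta)=A(f)$, $\u{B}=B$ and $\u{v}=v+2\delta$, gives (\ref{eq:E7gd}).
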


\begin{proof} 
Under the condition $x=f$, eliminating $y(x)$ and $y(x+\delta)$ from $L_2(x)=\u{L}_3(x+\delta)=0$, one obtains equation (\ref{eq:E7gd}). For $x=g$, eliminating $y(x+\delta)$ and $\o{y}(x)$ from $L_2(x)=L_3(x+\delta)=0$, we have the relation
\begin{equation}\label{eq:w1}
w=\dfrac{v(v-\delta)B(g)}{(f-g)(\o{f}-g)}.
\end{equation} 
In case of $x=g+v$, eliminating $y(x)$ and $\o{y}(x)$ from $L_2(x)=L_3(x)=0$, we have the relation
\begin{equation}\label{eq:w2}
w=\dfrac{v(v-\delta)A(g+v)}{(f-g-v)(\o{f}-g-v+\delta)}.
\end{equation} 
Hence, eliminating $w$ from relations (\ref{eq:w1}) and (\ref{eq:w2}), we obtain equation (\ref{eq:E7fu}).
\end{proof}

Let us consider the linear three term equation for the unknown function $y(x)$: $L_1(x)=0$ as the equation between $y(x+\delta), y(x), y(x-\delta)$.
Eliminating $\o{y}(x)$ and $\o{y}(x-\delta)$ from $L_2(x)=L_2(x-\delta)=L_3(x)=0$ (\ref{eq:E7L2}), (\ref{eq:E7L3}), 
one has the equation $L_1(x)=0$
where 
\begin{equation}\label{eq:E7L1}
\begin{array}l
L_1(x):= \dfrac{A(x)}{x-f}y(x+\delta)+\dfrac{B(x-\delta)}{x-f-\delta}y(x-\delta)\\[5mm]
\phantom{L_1(x): }-\dfrac{1}{x-g-v}\Big[\dfrac{A(x)(x-g)}{x-f}+\dfrac{V(x-\delta)}{(x-f-\delta)(x-g-\delta)}\Big]y(x),
\end{array}
\end{equation}
\begin{equation}\label{eq:V}
V(x):=B(x)(x-g-v)(x-g-v+\delta)-w(x-f)(x-\o{f}).
\end{equation}

Eliminating $\o{f}$ and $w$ from the equation $L_1=0$ (\ref{eq:E7L1}) by using equation (\ref{eq:E7fu}) and relation (\ref{eq:w1}), the expression $L_1$ can be rewritten as
\begin{equation}\label{eq:E7LL1}
\begin{array}l   
L_1(x)=\displaystyle \frac{A(x)}{x-f}\left[y(x+\delta)-\frac{x-g}{x-g-v} y(x)\right]+\frac{B(x-\delta)}{x-f-\delta}
   \left[y(x-\delta)-\frac{x-g-v-\delta}{x-g-\delta} y(x)\right]\\[5mm]
\phantom{L_1(x): } \displaystyle +v \left[
   \frac{B(g)}{(f-g)
   (x-g-\delta)}-\frac{A(g+v)}{\left(f-g-v\right)
   \left(x-g-v\right)} \right]y(x).
\end{array}
\end{equation}
The linear difference equations $L_1=0$ (\ref{eq:E7LL1}) and $L_2=0$ (\ref{eq:E7L2}) can be regarded as the Lax pair for type $d$-$E_7^{(1)}$ and in Section \ref{subsec:proof} it will be proved that the $d$-$E_7^{(1)}$ equation (\ref{eq:E7gd}), (\ref{eq:E7fu}) is the sufficient condition for the compatibility of $L_1=0$ and $L_2=0$. The equation $L_1=0$ is equivalent to the scalar Lax equation in \cite{KNY15} by using a suitable gauge transformation of $y(x)$. On the other hand, a $4 \times 4$ matrix Lax pair was given as certain Fuchsian system of differential equations in \cite{Boalch09}.

\begin{rem}\label{rem:fg}
The expression $(f-g)(f-g-v)(x-f)(x-f-\delta)L_1(x)$ (\ref{eq:E7LL1}) has the characterization as a curve in $\mathbb{P}^1 \times \mathbb{P}^1$ with respect to the coordinates $(f, g) \in \mathbb{P}^1 \times \mathbb{P}^1$ as follows:\\
(i) The expression $(f-g)(f-g-v)(x-f)(x-f-\delta)L_1(x)$ is  a polynomial of bidegree $(3,2)$ in $(f,g)$.\\
(ii) As a polynomial, the expression $(f-g)(f-g-v)(x-f)(x-f-\delta)L_1$ vanishes at the the following $12$ points $(f_i, g_i) \in  \mathbb{P}^1 \times \mathbb{P}^1$ $(i=1,\ldots,12)$:
\begin{equation}
\begin{array}l
(b_i, b_i)_{i=1}^{4}, (a_i,a_i-v)_{i=1}^{4}, (x,x), (x-\delta,x-v-\delta), (x, \frac{(x-v)y(x+\delta)-xy(x)}{y(x+\delta)-y(x)}), (x-\delta, \frac{(x-v-\delta)y(x)-(x-\delta)y(x-\delta)}{y(x)-y(x-\delta)}).
\end{array}
\end{equation}
Moreover, the polynomial $(f-g)(f-g-v)(x-f)(x-f-\delta)L_1$ is characterized by these properties up to multiplicative constant.
\end{rem}

We note that similar characterization of the equation $L_1=0$ for the discrete Painlev\'e equations have been already given in \cite{KNY15, Yamada09-2, Yamada11}. In Section \ref{subsec:proof} we gives the different characterization of the coefficients of $y(x+\delta), y(x), y(x-\delta)$  in terms of $x$. 

\subsection{Proof of the compatibility condition of the Lax pair}\label{subsec:proof}　\\
In Section \ref{subsec:Lax} we derived the $d$-$E_7^{(1)}$ equation (\ref{eq:E7gd}) and (\ref{eq:E7fu}) as the necessary condition for the compatibility of the Lax pair (\ref{eq:E7LL1}) and (\ref{eq:E7L2}). In this subsection, we will prove that the $d$-$E_7^{(1)}$ equation is the sufficient condition for the compatibility of the Lax pair. 

\begin{lem}\label{lem:E7L1p}
The expression $(x-f)(x-f-\delta)L_1(x)$ (\ref{eq:E7L1}) (or (\ref{eq:E7LL1})) has the following characterization:\\
\hspace{5mm}(i) The coefficients of $y(x+\delta), y(x), y(x-\delta)$ are polynomials of degree $5$ in $x$.\\
\hspace{5mm}(ii) Under the condition 
\begin{equation}\label{eq:cond}
\begin{array}l
\displaystyle\frac{y(x+\delta)}{y(x)}=1+\frac{v}{x}+\dfrac{v(v-\delta)/2}{x^2}+O\Big(\frac{1}{x^3}\Big), \quad \frac{y(x-\delta)}{y(x)}=1-\frac{v}{x}+\frac{v(v-\delta)/2}{x^2}+O\Big(\frac{1}{x^3}\Big),
\end{array}
\end{equation}
the terms $x^5, \ldots, x$ in the expression $(x-f)(x-f-\delta)L_1(x)$ vanish, namely $(x-f)(x-f-\delta)L_1(x)=O(x^0)$ around $x=\infty$.\\
\hspace{5mm}(iii) The equation $(x-f)(x-f-\delta)L_1=0$ holds at the two points $x=f, f+\delta$ where
\begin{equation}\label{eq:y}
\dfrac{y(f+\delta)}{y(f)} =\dfrac{f-g}{f-g-v}.
\end{equation}
Moreover, the coefficient of $y(x)$ in the polynomial $(x-f)(x-f-\delta)L_1(x)$ is uniquely characterized by these properties, provided that the coefficients of $y(x+\delta)$ and $y(x-\delta)$ are given as $(x-f-\delta)A(x)$ and $(x-f)B(x-\delta)$ respectively.
\end{lem}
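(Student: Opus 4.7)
The plan is to verify (i), (ii), (iii) directly from the explicit form (\ref{eq:E7LL1}) of $L_{1}(x)$, and then deduce the uniqueness claim by a short dimension count. Write $(x-f)(x-f-\delta)L_{1}(x)$ in the form $(x-f-\delta)A(x)y(x+\delta)+(x-f)B(x-\delta)y(x-\delta)+C(x)y(x)$ and focus on $C(x)$.

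For (i), the rational function $C(x)$ a priori has simple poles possible at $x=g+v$ and $x=g+\delta$. I would check that the residue at $x=g+\delta$ of the $B(x-\delta)$ piece of (\ref{eq:E7LL1}) is cancelled by the residue of the $B(g)$ piece, and that the residue at $x=g+v$ of the $A(x)$ piece is cancelled by the residue of the $A(g+v)$ piece; each is a one-line calculation. A degree count on the resulting polynomial then gives $\deg C = 5$. For (iii), substitute $x=f$ in the displayed expression: the middle term vanishes and the equation reduces to $-\delta A(f)y(f+\delta)+C(f)y(f)=0$. Direct evaluation of (\ref{eq:E7LL1}) at $x=f$ yields $C(f)=\delta A(f)(f-g)/(f-g-v)$, which produces exactly the ratio $y(f+\delta)/y(f)=(f-g)/(f-g-v)$. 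The case $x=f+\delta$ is symmetric and gives $C(f+\delta)=-\delta B(f)(f-g-v)/(f-g)$.

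Condition (ii) is the main work. Expanding each bracket of (\ref{eq:E7LL1}) via (\ref{eq:cond}), the key identity is
$$\frac{y(x+\delta)}{y(x)}-\frac{x-g}{x-g-v}=-\frac{v(v+\delta+2g)}{2x^{2}}+O(x^{-3}),$$
with a sign-reversed companion for the second bracket. After multiplying by the rational prefactors in (\ref{eq:E7LL1}) and by $(x-f)(x-f-\delta)$, the leading contributions of the first two groups cancel identically at order $x^{3}$. Continuing the expansion to the next two orders, the cancellations at $x^{2}$ and $x^{1}$ involve the third group of terms in (\ref{eq:E7LL1}) and close via the constraint (\ref{eq:cons}) on $\sum a_{i}$ and $\sum b_{i}$. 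I expect this bookkeeping to be the main obstacle, though it is purely mechanical.

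For the uniqueness claim, let $C_{1},C_{2}$ be two degree-$5$ polynomials both playing the role of the coefficient of $y(x)$ and satisfying (ii) and (iii) with the prescribed coefficients of $y(x\pm\delta)$. Their difference $D=C_{1}-C_{2}$ is a polynomial of degree $\le 5$, and subtracting the two instances of (ii) gives $D(x)=O(1)$ as $x\to\infty$, so $D$ is a constant. Condition (iii) at $x=f$ then forces $D(f)=0$, hence $D\equiv 0$. In particular only one of the two evaluations in (iii) is needed for uniqueness; the consistency of the other reflects the $d$-$E_{7}^{(1)}$ equation itself.
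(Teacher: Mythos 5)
Your parts (i) and (iii) are correct: the two residue cancellations at $x=g+v$ and $x=g+\delta$ do hold, the coefficient $C(x)$ of $y(x)$ in $(x-f)(x-f-\delta)L_1(x)$ is a polynomial of degree $5$, and the evaluations $C(f)=\delta A(f)(f-g)/(f-g-v)$, $C(f+\delta)=-\delta B(f)(f-g-v)/(f-g)$ both reproduce (\ref{eq:y}). For (i) this is a legitimate alternative to the paper's route, which works instead with the form (\ref{eq:E7L1}) and uses (\ref{eq:w1}), (\ref{eq:w2}) to reduce $V(x-\delta)/(x-g-\delta)$ to a polynomial.

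The genuine gap is in (ii), and it is not the ``purely mechanical bookkeeping'' you defer. Write
\begin{equation*}
(x-f)(x-f-\delta)L_1(x)=\big[P(x)r_+(x)+Q(x)r_-(x)+C(x)\big]y(x),\qquad r_\pm(x)=\frac{y(x\pm\delta)}{y(x)},
\end{equation*}
with $P(x)=(x-f-\delta)A(x)$, $Q(x)=(x-f)B(x-\delta)$ of degree $5$. Condition (\ref{eq:cond}) fixes $r_\pm$ only modulo $O(x^{-3})$, hence it pins this expression down only modulo $O(x^{2})$: the coefficients of $x^{2}$ and $x$ simply cannot be computed by ``continuing the expansion'' from (\ref{eq:cond}) alone. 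Writing $r_+(x)=1+v/x+\tfrac{v(v-\delta)/2}{x^2}+c_3x^{-3}+\cdots$, one must invoke the extra fact that both ratios come from a single function, $r_-(x)=1/r_+(x-\delta)$, which forces $r_-(x)=1-v/x+\tfrac{v(v-\delta)/2}{x^2}-c_3x^{-3}+\cdots$. With this, the $x^{2}$ coefficient is independent of $c_3$ and vanishes, and here the constraint (\ref{eq:cons}) enters as you anticipated (it gives $\sum a_i-\sum b_i=2v+\delta$, i.e.\ the $x^4$-coefficients of $P$ and $Q$ differ by $2(\delta-v)$). But the $x^{1}$ coefficient works out to $-\delta c_3+(\hbox{explicit in } f,g,a_i,b_i)$: it depends on the undetermined tail, so its vanishing is \emph{not} an identity closable by (\ref{eq:cons}); it is the equation that determines $c_3$. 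One can see the obstruction without any expansion: at order $x$ the last group of (\ref{eq:E7LL1}) contributes $v\big[B(g)/(f-g)-A(g+v)/(f-g-v)\big]$, which is not polynomial in $(f,g)$, while every term a naive substitution of (\ref{eq:cond}) produces at that order is polynomial in $(f,g)$. The structural reason is a resonance: the two exponents of $L_1$ at $x=\infty$ are $v/\delta$ and $v/\delta-1$. So (ii) must be read as the existence of a formal solution with behaviour (\ref{eq:cond}); the vanishing at $x^5,\dots,x^2$ are genuine conditions on $C$ (the $x^2$ one being the resonance consistency condition), while the vanishing at $x$ merely fixes the next coefficient of $y$.

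This collapse from five effective conditions to four breaks your uniqueness count. Two admissible coefficients $C_1,C_2$ may be certified by functions with different tails $c_3^{(1)}\neq c_3^{(2)}$, and then $D=C_1-C_2=\delta\,(c_3^{(1)}-c_3^{(2)})\,x+O(1)$ is linear, not constant. You therefore need \emph{both} evaluation points of (iii), $D(f)=D(f+\delta)=0$, to conclude $D\equiv 0$; equivalently, (ii) determines the coefficients of $x^5,\dots,x^2$ of $C$ and (iii) determines the remaining two, the $2\times 2$ system being nonsingular because the nodes differ by $\delta\neq 0$. Your closing claims --- that one evaluation suffices and that the consistency of the other ``reflects the $d$-$E_7^{(1)}$ equation itself'' --- are both incorrect: the agreement of the two evaluations in (\ref{eq:y}) is the apparent-singularity property of $L_1$, which holds identically in $(f,g)$ and is independent of the evolution equations (\ref{eq:E7gd}), (\ref{eq:E7fu}).
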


We remark that two points $x=f, f+\delta$ are apparent singularities in the sense that at those two points the equation $(x-f)(x-f-\delta)L_1(x)=0$ is satisfied by the same condition (in this case (\ref{eq:y})).

\begin{proof}
The property (i) is given by computation using equations (\ref{eq:w1}) and (\ref{eq:w2}). Namely, the expression $\frac{V(x-\delta)}{x-g-\delta}$ reduces to a polynomial of degree 5 in $x$ under condition (\ref{eq:w1}). Furthermore the coefficient of the term $y(x)$ is given as a polynomial of degree 5 in $x$ by using condition (\ref{eq:w2}). The property (ii) can easily be confirmed by condition (\ref{eq:cond}). The property (iii) follows by substituting $x=f, f-\delta$ into the equation $L_1(x)=0$. 
\end{proof}

Eliminating ${y}(x)$ and ${y}(x+\delta)$ from $L_2(x)= L_3(x)=L_3(x+\delta)=0$ (\ref{eq:E7L2}), (\ref{eq:E7L3}), one has the linear three term equation $L_1^*(x)=0$ for 
$\o{y}(x+\delta), \o{y}(x), \o{y}(x-\delta)$ where
\begin{equation}\label{eq:E7L1*}
\begin{array}l
L_1^*(x):= \dfrac{\o{A}(x)}{x-\o{f}}\o{y}(x+\delta)+\dfrac{\o{B}(x-\delta)}{x-\o{f}-\delta}\o{y}(x-\delta)\\[5mm]
\phantom{L_1(x): }-\dfrac{1}{x-g-v}\Big[\dfrac{A(x)(x-g-\delta)}{x-\o{f}-\delta}+\dfrac{V(x)}{(x-\o{f})(x-g)}\Big]\o{y}(x).
\end{array}
\end{equation}

The following Lemma (and its proof) is similar to Lemma \ref{lem:E7L1p}.
\begin{lem}\label{lem:E7L1*p}
The expression $(x-\o{f})(x-\o{f}-\delta)L_1^*(x)$ (\ref{eq:E7L1*}) has the following characterization:\\
\hspace{5mm}(i) The coefficients of $\o{y}(x+\delta), \o{y}(x), \o{y}(x-\delta)$ are polynomials of degree $5$ in $x$.\\
\hspace{5mm}(ii) Under the condition 
\begin{equation}\label{eq:cond*}
\begin{array}l
\displaystyle\frac{\o{y}(x+\delta)}{\o{y}(x)}=1+\frac{\o{v}}{x}+\dfrac{\o{v}(\o{v}-\delta)/2}{x^2}+O\Big(\frac{1}{x^3}\Big), \quad \frac{\o{y}(x-\delta)}{\o{y}(x)}=1-\frac{\o{v}}{x}+\frac{\o{v}(\o{v}-\delta)/2}{x^2}+O\Big(\frac{1}{x^3}\Big),
\end{array}
\end{equation}
the terms $x^5, \ldots, x$ in the polynomial $(x-\o{f})(x-\o{f}-\delta)L_1(x)$ vanish, namely $(x-\o{f})(x-\o{f}-\delta)L_1^*(x)=O(x^0)$ around $x=\infty$.\\
\hspace{5mm}(iii) The equation $(x-\o{f})(x-\o{f}-\delta)L_1^*=0$ holds at the two points $x=\o{f}, \o{f}+\delta$ where
\begin{equation}\label{eq:yu}
\dfrac{\o{y}(\o{f}+\delta)}{\o{y}(\o{f})} =\dfrac{\o{B}(\o{f})(\o{f}-g-v+\delta)}{\o{A}(\o{f})(\o{f}-g)}.
\end{equation}
Moreover, the coefficient of $\o{y}(x)$ in the polynomial $(x-\o{f})(x-\o{f}-\delta)L_1^*(x)$ is uniquely characterized by these properties, provided that the coefficients of $\o{y}(x+\delta)$ and $\o{y}(x-\delta)$ are given as $(x-\o{f}-\delta)\o{A}(x)$ and $(x-\o{f})\o{B}(x-\delta)$ respectively.
\end{lem}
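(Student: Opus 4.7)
The plan is to follow the proof of Lemma \ref{lem:E7L1p} in parallel, exploiting the structural symmetry between $L_1$ and $L_1^*$ under the exchange of unbarred with barred shifts. For property (i), the coefficients of $\o{y}(x+\delta)$ and $\o{y}(x-\delta)$ in $(x-\o{f})(x-\o{f}-\delta)L_1^*(x)$ are $(x-\o{f}-\delta)\o{A}(x)$ and $(x-\o{f})\o{B}(x-\delta)$, both of which are manifestly polynomials of degree $5$ in $x$. For the coefficient of $\o{y}(x)$ I would verify the cancellation of the two apparent simple poles at $x = g$ and $x = g+v$: at $x = g$, direct substitution yields $V(g) = v(v-\delta)B(g) - w(f-g)(\o{f}-g)$, which vanishes by relation (\ref{eq:w1}); at $x = g+v$, the combined numerator after clearing $1/(x-g-v)$ vanishes by relation (\ref{eq:w2}) (equivalently by (\ref{eq:E7fu})). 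This leaves a polynomial of degree $5$ in $x$.

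For property (ii), I would substitute the asymptotic expansions (\ref{eq:cond*}) into $(x-\o{f})(x-\o{f}-\delta)L_1^*(x)$ and expand at $x = \infty$ in powers of $x$. The vanishing of the coefficients of $x^5,\dots,x$ follows from the parameter constraint (\ref{eq:cons}) together with the identity $\o{v} = v - 2\delta$ implied by the shift (\ref{eq:E7T}); the computation is parallel to the unbarred case and reduces to the same symmetric-function identities in the $a_i, b_i$. For property (iii), substituting $x = \o{f}$ into $L_1^*(x) = 0$ kills the $\o{y}(x-\delta)$ term via the factor $(x-\o{f})$ appearing in its coefficient (after multiplication), so the remaining two-term relation yields (\ref{eq:yu}) upon using the definition (\ref{eq:V}) of $V$ together with the expressions (\ref{eq:w1}), (\ref{eq:w2}) for $w$; the substitution $x = \o{f}+\delta$ kills the $\o{y}(x+\delta)$ term instead and produces the same ratio, confirming the apparent singularity. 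Uniqueness of the coefficient of $\o{y}(x)$ then follows because a polynomial of degree $5$ has six free coefficients, of which five are fixed by (ii) and one by (iii), leaving it determined relative to the normalization already set by the coefficients of $\o{y}(x\pm\delta)$.

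The main obstacle will be the clerical bookkeeping in (ii): the expression $L_1^*$ mixes unbarred quantities ($A(x)$, $V(x)$, $g$, $v$) with barred ones ($\o{A}(x)$, $\o{B}(x-\delta)$, $\o{f}$, $\o{v}$), so one must carefully track how the shift $T$ acts on each ingredient when expanding at infinity, and in particular use the shifted asymptotics governed by $\o{v}$ rather than $v$. However, since this computation repeats the asymptotic match from Lemma \ref{lem:E7L1p} after interchanging the roles of barred and unbarred shifts, no new ideas beyond relations (\ref{eq:w1}), (\ref{eq:w2}) and the constraint (\ref{eq:cons}) should be required.
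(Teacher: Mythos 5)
Your proposal is correct and takes essentially the same approach as the paper, whose proof of this lemma consists precisely of the remark that it is "similar to Lemma \ref{lem:E7L1p}": you verify (i) by the pole cancellations at $x=g$ (via (\ref{eq:w1}), since $V(g)=v(v-\delta)B(g)-w(f-g)(\o{f}-g)=0$) and at $x=g+v$ (via (\ref{eq:w2})), (ii) by direct expansion at $x=\infty$ using $\o{v}=v-2\delta$, and (iii) by substitution at the two apparent singular points $x=\o{f},\o{f}+\delta$, with the correct counting for uniqueness. The only cosmetic point is that your appeal to (\ref{eq:w1}), (\ref{eq:w2}) in step (iii) is unnecessary, since the $w$-dependent term of $V(x)$ in (\ref{eq:V}) vanishes automatically at $x=\o{f}$ and the whole $V$-term is killed by the factor $x-\o{f}-\delta$ at $x=\o{f}+\delta$.
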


The following is the main result of this paper.
\begin{thm}\label{thm:Lax}
The linear equations $L_1=0$ (\ref{eq:E7LL1}) and $L_2=0$ (\ref{eq:E7L2}) for the  unknown function $y(x)$ are compatible
if and only if the $d$-$E_7^{(1)}$ equation (\ref{eq:E7gd}) and (\ref{eq:E7fu}) are satisfied. 
\end{thm}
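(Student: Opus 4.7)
The plan is to address the two directions separately. The necessity ($\Rightarrow$) is essentially Proposition~\ref{prop:E7eq} together with the derivation of $L_1$ in Section~\ref{subsec:Lax}: if $L_1=0$ and $L_2=0$ are compatible, then $w$ is well defined by (\ref{eq:w1}), the auxiliary equation $L_3=0$ of (\ref{eq:E7L3}) is consistent, and evaluating at $x=f,g,g+v$ reproduces (\ref{eq:E7gd}) and (\ref{eq:E7fu}) as in the proof of Proposition~\ref{prop:E7eq}. The substance of the theorem is therefore sufficiency.

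For sufficiency, assume (\ref{eq:E7gd}) and (\ref{eq:E7fu}), let $y(x)$ satisfy $L_1=0$ and $L_2=0$, and define $\o{y}(x)$ through $L_2$. The task is to show that $\o{y}$ satisfies $L_1^*=0$ for $L_1^*$ as in (\ref{eq:E7L1*}). Applying $L_2$ at $x-\delta,x,x+\delta$ expresses $\o{y}(x-\delta),\o{y}(x),\o{y}(x+\delta)$ linearly in the four consecutive values $y(x-\delta),\ldots,y(x+2\delta)$, while applying $L_1$ at $x$ and $x+\delta$ reduces these to the two independent unknowns $y(x)$ and $y(x+\delta)$. Hence there is a unique (up to scalar) three-term relation among $\o{y}(x-\delta),\o{y}(x),\o{y}(x+\delta)$. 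To identify this derived relation with $L_1^*$, I would invoke the uniqueness statement of Lemma~\ref{lem:E7L1*p}: after a global rescaling so that the coefficients of $\o{y}(x\pm\delta)$ match those of $L_1^*$, one verifies the three characterising properties (i)--(iii). Property (i), polynomiality of degree $5$, is automatic from the elimination; property (ii), the asymptotic normalisation (\ref{eq:cond*}), is inherited from the normalisation (\ref{eq:cond}) of $y$ through $L_2$.

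The main obstacle is property (iii), the apparent-singularity ratio (\ref{eq:yu}) at $x=\o{f},\o{f}+\delta$. The point is that one must show that $\o{y}$, as defined by $L_2$, actually satisfies (\ref{eq:yu}), and this is where the $d$-$E_7^{(1)}$ equation is genuinely consumed. Concretely, I would compute $\o{y}(\o{f})$ and $\o{y}(\o{f}+\delta)$ from $L_2$, use $L_1$ evaluated at $x=\o{f}+\delta$ to eliminate $y(\o{f}+2\delta)$, and then form the ratio; (\ref{eq:E7fu}), together with the equality of the two expressions (\ref{eq:w1}) and (\ref{eq:w2}) for $w$, should collapse this ratio to the right-hand side of (\ref{eq:yu}). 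Once this matching is in place, Lemma~\ref{lem:E7L1*p} identifies the derived three-term recurrence with $L_1^*$, so $L_1^*(\o{y})=0$ and the claimed compatibility follows.
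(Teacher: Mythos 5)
Your necessity sketch is fine in spirit, but the sufficiency argument has a genuine structural gap: it proves the wrong statement. Showing that $\o{y}$ satisfies $L_1^*=0$ in the form (\ref{eq:E7L1*}) is \emph{not} compatibility — that part is essentially automatic. Once $\o{f}$ and $w$ are introduced through (\ref{eq:E7fu}) and (\ref{eq:w1}), the equation $L_1=0$ in the form (\ref{eq:E7LL1}) reverts to the form (\ref{eq:E7L1}); then $L_1=L_2=0$ yields $L_3=0$, and eliminating $y(x),y(x+\delta)$ from $L_2(x)=L_3(x)=L_3(x+\delta)=0$ yields $L_1^*(\o{y})=0$ by pure linear elimination. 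No discrete Painlev\'e equation is consumed there beyond using (\ref{eq:E7fu}) as the definition of $\o{f}$; this is exactly the right-hand column of the commutative diagram in the paper's proof, and your elimination-plus-Lemma-\ref{lem:E7L1*p} route (including your property-(iii) computation via (\ref{eq:E7fu}) and the equality of (\ref{eq:w1}) and (\ref{eq:w2})) only re-establishes it. Compatibility means something stronger: the equation satisfied by $\o{y}$ must be the $T$-shift $\o{L}_1=0$ of the equation (\ref{eq:E7LL1}) satisfied by $y$, i.e. (\ref{eq:E7LL1}) with $f,g,v,A,B$ replaced by $\o{f},\o{g},\o{v},\o{A},\o{B}$. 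The expression (\ref{eq:E7L1*}) is not manifestly of that form: its middle coefficient still involves the unshifted $g$, $v$, $A$ and $V$, and the variable $\o{g}$ does not occur in it at all.

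The missing step is precisely the identification $L_1^*=\o{L}_1$, and it is the ``remaining task'' in the paper's proof: by Lemma \ref{lem:E7L1p} shifted by $T$, the equation $\o{L}_1=0$ is characterized by properties (i),(ii) with $\o{v}$ together with the apparent-singularity ratio $\o{y}(\o{f}+\delta)/\o{y}(\o{f})=(\o{f}-\o{g})/(\o{f}-\o{g}-\o{v})$, the $T$-shift of (\ref{eq:y}); by Lemma \ref{lem:E7L1*p}, $L_1^*=0$ is characterized by the same (i),(ii) together with the ratio (\ref{eq:yu}). These two ratios agree exactly because the $T$-shift of (\ref{eq:E7gd}) (using $\o{v}=v-2\delta$ and $\u{\o{g}}=g$) reads $(\o{f}-\o{g})/(\o{f}-\o{g}-\o{v})=\o{B}(\o{f})(\o{f}-g-v+\delta)/\bigl(\o{A}(\o{f})(\o{f}-g)\bigr)$; only then does the uniqueness clause of Lemma \ref{lem:E7L1*p} give $L_1^*=\o{L}_1$ and hence compatibility. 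Your outline never performs this matching: in your sufficiency argument equation (\ref{eq:E7gd}) is assumed but never used, and $\o{g}$ never appears, so the argument cannot establish that $\o{y}$ satisfies the shifted equation, whose very statement involves $\o{g}$. Concretely, the fix is to replace your final sentence by the comparison of (\ref{eq:yu}) with the $T$-shift of (\ref{eq:y}) via the shifted equation (\ref{eq:E7gd}), which is where the second half of the $d$-$E_7^{(1)}$ system is genuinely consumed.
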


\begin{proof}
The compatibility means that the shift operator $T$ changes the equation $L_1=0$ into the equation $L_1^{\ast}=0$, i.e. the commutativity of  the following: 
\begin{equation}\label{eq:classification}\nonumber
\begin{array}{cccc}
L_1^*=0 \ (\mbox{Lemma} \  \ref{lem:E7L1*p})\quad &\Leftrightarrow&L_1^*=0\  (\ref{eq:E7L1*})\\
\uparrow&&\uparrow\\
\mbox{$T$-shift}\hspace{2mm}(\ref{eq:E7T})&&L_2= L_3=0 \hspace{2mm}(\ref{eq:E7L2}), (\ref{eq:E7L3})\\
\uparrow&&\downarrow\\
L_1=0 \ (\mbox{Lemma} \  \ref{lem:E7L1p})\ &\Leftrightarrow& L_1=0 \ (\ref{eq:E7L1}) &\Leftrightarrow L_1=0 \ (\ref{eq:E7LL1}).
\end{array}
\end{equation}
This commutativity is almost clear from the characterizations (i),(ii) of the equation $L_1=0$ (respectively $L_1^*=0$) in Lemma \ref{lem:E7L1p} (respectively Lemma \ref{lem:E7L1*p}).
The remaining task is to check that the operator $T$ changes expression (\ref{eq:y}) into expression (\ref{eq:yu}), utilizing the characterization (iii) of the equation $L_1=0$ (respectively $L_1^*=0$) and equation (\ref{eq:E7gd}).
\end{proof}

As the point of the proof, the following two are applied to type $d$-$E_7^{(1)}$ together: The first is that the equation $L_1(f,\o{f}, g)=0$ in terms of $f, \o{f}, g$ is derived from the equations $L_2(f,g)=0$ and $L_3(\o{f},g)=0$ (see \cite{NTY13}). The second is that the equation $L_1(f,\o{f}, g)=0$ is characterized as a polynomial in terms of $x$ (see \cite{NY16}).

\section{Degenerations}\label{sec:Degenerations}　
In this section, we will consider degeneration limits from type $d$-$E_7^{(1)}$ to types $d$-$E_6^{(1)}$, $d$-$D_4^{(1)}$ and $d$-$A_3^{(1)}$.
\subsection{Degeneration from type $d$-$E_7^{(1)}$ to type $d$-$E_6^{(1)}$}\label{subsubsec:E6}　\\
Degeneration from type $d$-$E_7^{(1)}$ to type $d$-$E_6^{(1)}$ is obtained by setting a transformation
\begin{equation}
\begin{array}l
a_4 \to -\dfrac{1}{\varepsilon},\quad a_3 \to -\dfrac{1}{\varepsilon}+a_1-a_2+b_1+b_2-b_3+b_4+\delta,\quad 
\dfrac{y(x+\delta)}{y(x)} \to \dfrac{\varepsilon y(x+\delta)}{y(x)},
\end{array}
\end{equation}
and taking the limit $\varepsilon \to 0$.\\
The time evolution is given by a shift operator
\begin{equation}\label{eq:E6T}
T: (a_1,a_2,b_1,b_2,b_3,b_4, f, g) \mapsto (a_1-\delta,a_2-\delta,b_1,b_2,b_3,b_4, \o{f}, \o{g}).
\end{equation}
The $d$-Painlev\'e $E_6^{(1)}$ equation \cite{KNY15, RGTT01, Sakai07} is equivalent to a birational transformation
\begin{equation}\label{eq:E6eq}
\begin{array}{l}
(f-g)(f-\u{g})=\dfrac{B(f)}{(f-a_1)(f-a_2)},\quad (f-g)(\o{f}-g)=\dfrac{B(g)}{(g+u_1)(g+u_2)}.
\end{array}
\end{equation}
where $B(x):=\prod_{i=1}^4(x-b_i)$, $u_1:=a_2-b_1-b_2-b_4-\delta$, and $u_2:=a_1-b_3$.\\
The Lax pair is linear difference equations
\begin{equation}\label{eq:E6Lax}
\begin{array}{l}
L_1(x):= \displaystyle \frac{(x-a_1)(x-a_2)}{x-f}\left[y(x+\delta)-(x-g)y(x)\right]+\frac{B(x)}{x-f-\delta}
   \left[y(x-\delta)-\frac{y(x)}{x-g-\delta}\right]\\[5mm]
\phantom{L_1(x):= } \displaystyle +\left[(g+u_1)(g+u_2)-\dfrac{B(x)}{(f-g)(x-g-\delta)} \right]y(x)=0,
\\[5mm]  
L_2(x):= (x-f) \o{y}(x)- y(x+\delta)+(x-g) y(x)=0.
\end{array}
\end{equation}The equation $L_1=0$ (\ref{eq:E6Lax}) is equivalent to the scalar Lax equation in \cite{KNY15} by using suitable gauge transformation of $y(x)$. We remark that a 2 $\times$ 2 matrix Lax pair was obtained utilizing moduli space of difference connection in \cite{AB06} and a $3 \times 3$ matrix Lax pair was given as the Fuchsian system of differential equations in \cite{Boalch09}.
\subsection{Degeneration from type $d$-$E_6^{(1)}$ to type $d$-$D_4^{(1)}$}\label{subsec:D4}　\\
Degeneration from type $d$-$E_6^{(1)}$ to type $d$-$D_4^{(1)}$ is obtained by setting a transformation
\begin{equation}
\begin{array}l
b_3 \to -\dfrac{1}{\varepsilon},\quad b_4 \to -\dfrac{1}{\varepsilon t},\quad g \to -\dfrac{1}{g\varepsilon},\quad \dfrac{y(x+\delta)}{y(x)} \to \dfrac{y(x+\delta)}{\varepsilon y(x)},
\quad \dfrac{\o{y}(x)}{y(x)} \to \dfrac{\o{y}(x)}{\varepsilon y(x)},
\end{array}
\end{equation}
and taking the limit $\varepsilon \to 0$.\\
The time evolution is given by a shift operator
\begin{equation}\label{eq:D4T}
T: (a_1,a_2,b_1,b_2,t,f,g) \mapsto (a_1-\delta,a_2-\delta,b_1,b_2,t,\o{f},\o{g}).
\end{equation}
The $d$-$D_4^{(1)}$ equation \cite{GORS98, KNY15, RGTT01, Sakai07} is equivalent to a birational transformation
\begin{equation}\label{eq:D4eq}
\begin{array}{l}
g\u{g}=\dfrac{t(f-a_1)(f-a_2)}{(f-b_1)(f-b_2)},\quad f+\o{f}-b_1-b_2+\dfrac{a_1}{g-1}+\dfrac{tu}{g-t}=0.
\end{array}
\end{equation}
where $u:=a_2-b_1-b_2-\delta$.\\
The Lax pair is linear difference equations
\begin{equation}\label{eq:D4Lax}
\begin{array}{l}
L_1(x):=\displaystyle \frac{t(x-a_1)(x-a_2)}{x-f}\left[y(x+\delta)-\frac{y(x)}{g}\right]+\frac{(x-b_1-\delta)(x-b_2-\delta)}{x-f-\delta}
   \left[y(x-\delta)-gy(x)\right]\\[5mm]
\phantom{L_1(x):=} \displaystyle +(g-1)(g-t)\left[\dfrac{a_1}{g-1}+\dfrac{x+f-a_1-a_2}{g}+\dfrac{u}{g-t} \right]y(x)=0,
\\[5mm]  
L_2(x):=(x-f) \o{y}(x)- y(x+\delta)+ \dfrac{y(x)}{g}=0.
\end{array}
\end{equation}
The equation $L_1=0$ (\ref{eq:D4Lax}) is equivalent to the scalar Lax equation in \cite{KNY15} by using suitable gauge transformation of $y(x)$. We remark that a 2 $\times$ 2 matrix Lax pair for type $d$-${\rm P_V}$ was obtained by utilizing a Schlesinger transformation of a differential equation in \cite{GORS98} and a 2 $\times$ 2 matrix Lax pair was obtained utilizing moduli space of difference connection in \cite{AB06}.

\subsection{Degeneration from type $d$-$D_4^{(1)}$ to type $d$-$A_3^{(1)}$}
\label{subsec:A3}　\\
Degeneration from type $d$-$D_4^{(1)}$ to type $d$-$A_3^{(1)}$ is obtained by setting a transformation
\begin{equation}
\begin{array}l
a_2 \to -\dfrac{1}{\varepsilon},\quad t \to \varepsilon t,
\end{array}
\end{equation}
and taking the limit $\varepsilon \to 0$.\\
The time evolution is given by a shift operator
\begin{equation}\label{eq:A3T}
T: (a_1,b_1,b_2,t,f,g) \mapsto (a_1-\delta,b_1,b_2,t,\o{f},\o{g}).
\end{equation}
The $d$-$A_3^{(1)}$ equation \cite{GORS98,KNY15,RGTT01,Sakai07} is equivalent to a birational transformation
\begin{equation}\label{eq:A3eq}
\begin{array}{l}
g\u{g}=\dfrac{t(f-a_1)}{(f-b_1)(f-b_2)},\quad f+\o{f}-b_1-b_2-\dfrac{t}{g}+\dfrac{a_1}{g-1}=0.
\end{array}
\end{equation}
The Lax pair is linear difference equations
\begin{equation}\label{eq:A3Lax}
\begin{array}{l}
L_1(x):=\displaystyle \frac{t(x-a_1)}{x-f}\left[y(x+\delta)-\frac{y(x)}{g}\right]+\frac{(x-b_1-\delta)(x-b_2-\delta)}{x-f-\delta}
   \left[y(x-\delta)-gy(x)\right]\\[5mm]
\phantom{L_1(x):=} \displaystyle +(g-1)\left[x+f-b_1-b_2-\delta-\dfrac{t}{g}+\dfrac{a_1}{g-1} \right]y(x)=0,
\\[5mm]  
L_2(x):=(x-f) \o{y}(x)- y(x+\delta)+ \dfrac{y(x)}{g}=0.
\end{array}
\end{equation}
The equation $L_1=0$ (\ref{eq:A3Lax}) is equivalent to the scalar Lax equation in \cite{KNY15} by using suitable gauge transformation of $y(x)$. We remark that a 2 $\times$ 2 matrix Lax pair for type $d$-${\rm P_{IV}}$ was obtained by using a Schlesinger transformation of a differential equation in \cite{GORS98}.
\section{Concluding Remarks}\label{sec:cr}　
As mentioned in Section \ref{sec:intro}, the results of this paper were obtained making use of Pad\'e interpolation problems. We will discuss the relation with Pad\'e problem and the results shortly.  

There exists a simple method to study the Painlev\'e/Garnier equations by using Pad\'e approximation \cite{Yamada09-1}. In this method, one can obtain the evolution equation, the Lax pair and some special solutions simultaneously, 
starting from a suitable Pad\'e approximation (or interpolation) problem. Concretely, for a suitable given function $Y(x)$, we look for polynomials $P_m(x)$ and $Q_n(x)$ of degree $m$ and $n \in \mathbb{Z}_{\ge 0}$, satisfying the approximation condition:
\begin{equation}\label{eq:pade1}
Y(x)\equiv\dfrac{P_m(x)}{Q_n(x)} \quad (\mbox{mod}\hspace{1mm} x^{m+n+1}),
\end{equation}
or the interpolation condition:
\begin{equation}\label{eq:pade2}
Y(x_s)=\dfrac{P_m(x_s)}{Q_n(x_s)} \quad (s=0,\ldots, m+n).
\end{equation}
Define certain shift operator $T$ (e.g.(\ref{eq:E7T})) and consider two linear three term relations satisfied by the function $y(x)=P_m(x), Y(x)Q_n(x)$, such as the equations $L_2(x)=L_3(x)=0$ (e.g.(\ref{eq:E7L2}), (\ref{eq:E7L3})). The compatibility of the linear relations gives Painlev\'e equation (e.g.(\ref{eq:E7gd}), (\ref{eq:E7fu})). 
Moreover special solutions of the Painlev\'e equation can be obtained from the polynomials $P_m(x)$ and $Q_n(x)$ of the Pad\'e condition (\ref{eq:pade1}) (or (\ref{eq:pade2})).
%
%
We call the method mentioned above "Pad\'e method". The Pad\'e method based on (\ref{eq:pade1}) and (\ref{eq:pade2}) have been applied in \cite{Ikawa13, Nagao15-2, NY16, Yamada09-1} and \cite{Ikawa13, Nagao15, NY16, NTY13, Yamada14} respectively.

The results of this paper can be obtained by the Pad\'e interpolation problems (\ref{eq:pade2}) on the additive ($\delta$-) grid $x_s=s\delta$. The interpolated sequences $Y_s:=Y(x_s)$ can be chosen as follows:
\begin{equation}\label{eq:Ylist}
\begin{tabular}{|c|c|c|c|c|}
\hline
&$d$-$E_7^{(1)}$&$d$-$E_6^{(1)}$&$d$-$D_4^{(1)}$&$d$-$A_3^{(1)}$\\
\hline
$\begin{array}{c}\\Y_s\\ \\\end{array}$&$\displaystyle\prod_{i=1}^3\dfrac{(\frac{b_i}{\delta})_s}{(\frac{a_i}{\delta})_s}$&
$\displaystyle\prod_{i=1}^2\dfrac{(\frac{b_i}{\delta})_s}{(\frac{a_i}{\delta})_s}$&
$c^s\displaystyle\dfrac{(\frac{b_1}{\delta})_s}{(\frac{a_1}{\delta})_s}$&
$d^s(\frac{b_1}{\delta})_s$\\
\hline
\end{tabular}
\end{equation}
Here parameters $a_i, b_i, c, d \in \mathbb{C}^{\times}$ and $a_1+a_2+a_3-b_1-b_2-b_3-\delta (m-n)=0$ is a constraint for the parameters in type $d$-$E_7^{(1)}$, and Pochhammer's symbol is defined by
\begin{equation}\label{eq:Poch}
\begin{array}{l}
(a_1, a_2, \cdots, a_i)_j:=\prod_{k=0}^{j-1}(a_1+k)(a_2+k)\cdots(a_i+k).
\end{array}
\end{equation}
The details will be presented in a forthcoming paper.
\section*{Acknowledgment}　
The author would like to express his gratitude to Professor Yasuhiko Yamada for fruitful discussion on this research.
He is also grateful to the organizer Saburou Kakei of the workshop "Theory of Integrable Systems and its Applications in Various Fields" (RIMS, August 2015), where the results of this note were presented.  
The author thanks to Professors Kenji Kajiwara, Tetsu Masuda, and Hidetaka Sakai for stimulating comments and for kindhearted support. The author would like to thank the referee for his valuable comments and suggestions.

\end{document}